\newtheorem{definition}{Definition}
\newtheorem{theorem}{Theorem}
\newtheorem{corollary}{Corollary}
\newtheorem{proof}{Proof}
\newtheorem{remark}{Remark}
\newtheorem{example}{Example}
\def\beq{\begin{equation}}
\def\eeq{\end{equation}}
\def\beqa{\begin{eqnarray}}
\def\eeqa{\end{eqnarray}}
\def\beqan{\begin{eqnarray*}}
\def\eeqan{\end{eqnarray*}}
\def\EE{{\mathbb{E}}}
\def\PP{{\mathbb{P}}}
\def\Wsf{ {\sf W}}
\def\Xsf{ {\sf X}}
\def\Usf{ {\sf U}}
\def\Vsf{ {\sf V}}
\def\Xc{\mathcal{X}}
\def\Yc{\mathcal{Y}}
\def\Uc{\mathcal{U}}
\def\Dc{\mathcal{D}}
\def\dbf{\mathbf{d}}
\def\Rstar{R^*(M)}          
\def\RGWstar{R_{GW}^*(M)}   
\def\RGWstarU{R_{GW}^*(M,\Usf)}   
\def\Rach{R_{GW}^{UB}}
\def\Rlb{R^{LB}(M)}         
\def\RGWlb{R_{GW}^{LB}(M)}  
\def\regop{\mathfrak R^{*}}         
\def\regopGW{\mathfrak R^{*}_{GW}}         
\def\GWregion{\mathfrak R_{GW}}   
\def\private{\rho}          
\def\RGWUlb{R_{GW}^{LB}(M,\Usf)}  
\def\GWregion{\mathfrak S_{GW}}   
\begin{document}

\title{Rate-Memory Trade-off for the Two-User Broadcast Caching Network with Correlated Sources}
\author{Parisa Hassanzadeh, Antonia M. Tulino, Jaime Llorca, Elza Erkip
\thanks{This work has been supported by NSF  grant \#1619129.}
\thanks{P. Hassanzadeh  and  E. Erkip are with the ECE Department of New York University, Brooklyn, NY. Email: \{ph990, elza\}@nyu.edu}
\thanks{J. Llorca  and A. Tulino are with Bell Labs, Nokia, Holmdel, NJ, USA. Email:  \{jaime.llorca, a.tulino\}@nokia-bell-labs.com}
\thanks{A. Tulino is with the DIETI, University of Naples Federico II, Italy. Email:  \{antoniamaria.tulino\}@unina.it}
}

\maketitle

\begin{abstract}
This paper studies the fundamental limits of caching in a network with two receivers and two files generated by a two-component discrete memoryless source with arbitrary joint distribution. 
Each receiver is equipped with a cache of equal capacity, and the requested files are delivered over a shared error-free broadcast link.
First, a lower bound on the optimal peak rate-memory trade-off 
is provided. Then, in order to leverage the correlation among the library files to alleviate the load over the shared link,
a two-step correlation-aware cache-aided coded multicast (CACM) scheme is proposed.
The first step uses Gray-Wyner source coding 
to 
represent the library 
via one common and two private descriptions, 
such that a second correlation-unaware multiple-request CACM step can exploit the additional coded multicast opportunities that arise. 
It is shown that the rate achieved by the proposed two-step scheme matches the lower bound for a significant memory regime  and it is within half of the conditional entropy for all other memory values.
\end{abstract}

\section{Introduction and Setup}~\label{sec:Introduction}

The use of caching at the wireless network edge 
has emerged as a promising approach to efficiently increase the capacity of wireless access networks. 
There have been extensive studies characterizing
the fundamental rate-memory trade-off in a broadcast caching network with a library composed of independent content \cite{maddah14fundamental,ji15order,ji2015caching}.
More recently, \cite{timo2016rate,ISTC2016,ITW2016} have studied the rate-memory trade-off when delivering correlated files.
In \cite{timo2016rate}, the authors consider a single-receiver multiple-file network with lossy reconstructions and characterize the trade-offs between rate, cache capacity, and reconstruction distortions. They extend the analysis to a scenario with two receivers and one cache, in which local caching gains 
can be explored. The works in \cite{ISTC2016} and \cite{ITW2016} consider a setting with an arbitrary number of files and receivers each having a cache. A practical correlation-aware 
scheme is proposed in \cite{ISTC2016}, in which content is cached according to both the popularity of files and their correlation with the rest of the library. Then, the requested content is delivered via a multicast codeword composed of compressed versions of the requested files. 
Alternatively, the work in \cite{ITW2016} 
addresses the dependency among content files by first compressing the correlated library, 
which is then treated as a library of independent files by a
conventional cache-aided coded multicast scheme.

In this paper, by focusing on a two-user two-file setting, we are able to characterize the optimal peak rate-memory trade-off of the broadcast caching network with correlated sources. 
To this end, we first provide a lower bound on the optimal peak rate-memory trade-off, which is derived using a cut-set argument on the corresponding cache-demand augmented graph \cite{llorca2013network}.
The lower bound improves the best known bound for correlated sources given in \cite{lim2016information}, and when particularized to independent sources, matches the corresponding best known bound derived in 
\cite{ghasemi2015improved}.
We then propose a two-step scheme, in which the source files are first encoded based on the Gray-Wyner network \cite{gray1974source}, and in the second step, they are cached and delivered through a multiple-request correlation-unaware cache-aided coded multicast scheme.
In the rest of the paper, we discuss the optimality of the proposed two-step scheme by characterizing a lower bound on the rate-memory trade-off for this class of schemes, and comparing it with the lower bound on the optimal 
trade-off. We identify the set of operating points in the Gray-Wyner region \cite{gray1974source,wyner1975common}, for which a two-step scheme is optimal over a range of cache capacities, and  approximates the optimal rate to within half of the conditional entropy for all cache sizes. 

The paper is organized as follows. Sec. \ref{sec:Problem Formulation} presents the information-theoretic problem formulation. In Sec. \ref{sec: genei scheme},
we introduce a class of 
two-step schemes based on the Gray-Wyner network. The lower bounds 
for the optimal 
and 
two-step schemes 
are provided in Sec. \ref{sec:Lower bound}, and 
later used to establish the optimality of an achievable two-step scheme proposed in Sec.~\ref{sec:achievable scheme}.
After analyzing an illustrative example in Sec.~\ref{sec:Order Optimality}, the paper is concluded in Sec.~\ref{sec:Conclusions}.

\section{Network Model and Problem Formulation}\label{sec:Problem Formulation}
We consider a broadcast caching network composed of one sender 
with access to a library of two files generated by a $2$-component discrete memoryless source (2-DMS). 
The 2-DMS model $(\Xc_1 \times \Xc_2, \, p(x_1, x_2))$ 
consists of two finite alphabets $\Xc_1,\Xc_2$ and a joint pmf  $p(x_1,x_2)$ over $\Xc_1 \times \Xc_2$.
The 2-DMS generates and i.i.d. random process $\{\Xsf_{1i},\Xsf_{2i}\}$ with $(\Xsf_1, \Xsf_2)\sim p(x_1,x_2)$.
For a block length $n$, the two library files are represented by sequences $\Xsf_1^n = (\Xsf_{11},\dots,\Xsf_{1n})$ and $\Xsf_2^n = (\Xsf_{21},\dots,\Xsf_{2n})$, respectively, where $\Xsf_1^n \in \Xc^n_1$ and $\Xsf_2^n \in \Xc^n_2$.
The sender communicates with two receivers 
$r_1$ and $r_2$ over a shared error-free broadcast link. Each receiver is equipped with a cache of size $nM$ bits,
where $M$ denotes the (normalized) cache capacity.

We assume that the system operates in two phases: 
a {\em caching phase} and a {\em delivery phase}. During the {caching phase}, which takes place at off-peak hours when network resources are abundant, the receiver caches are filled with functions of the library files, 
such that during the {delivery phase}, when receiver demands are revealed and resources are limited, the sender broadcasts the shortest possible codeword that allows each receiver to losslessy recover its requested file.
We refer to the overall scheme as a {\em cache-aided coded multicast scheme} (CACM).
Given a realization of the library, $\{X_1^n, X_2^n\}$, a CACM scheme consists of the following components:
\begin{itemize}
\item {\textbf{Cache Encoder:}} During the caching phase, the cache encoder designs the cache content of receiver $r_i$ using a mapping
 $f^{\mathfrak C}_{r_i}:
    \Xc_1^n\times\Xc_2^n \rightarrow [1: 2^{nM})$.
 The cache configuration of  receiver $r_i$  is denoted  by $Z_{r_i} = f^{\mathfrak  C}_{r_i}(X_1^n, X_2^n)$.

\item{\textbf{Multicast Encoder:}} During the delivery phase, each receiver requests a file from the library. 
The demand realization, denoted by $\dbf = (d_{r_1},d_{r_2}) \in \Dc \equiv \{1,2\}^2$, where $d_{r_i} \in\{1,2\}$ denotes the index of the file requested by receiver $r_i$, is revealed to the sender,
which then uses a fixed-to-variable mapping $f^{\mathfrak M}:{\mathcal D}  \times [1: 2^{nM})\times [1: 2^{nM})\times \Xc_1^n\times \Xc_2^n \rightarrow \Yc^\star$ to generate and transmit a multicast codeword $Y_{\dbf} = f^{\mathfrak  M}(\dbf,\{Z_{r_1},Z_{r_2}\}, \{X_1^n,X_2^n\})$ over the shared link.\footnote{We use $\star$ to indicate variable length.}
The codeword $Y_{\dbf}$ is designed for each demand realization according to the cache content, library files, and joint distribution $p(x_1,x_2)$, to enable almost-lossless reconstruction of the requested files.

\item{\textbf{Multicast Decoders:}} Each receiver $r_i$ uses a mapping $g^{\mathfrak  M}_{r_i} : \Dc \times \Yc^\star \times [1: 2^{nM}) \rightarrow \Xc_{d_{r_i}}^n$ to recover its requested file,
    $X_{d_{r_i}}^n$, using the received multicast codeword and its cache content as $\widehat{X}_{d_{r_i}}^n = g^{\mathfrak  M}_{r_i} (\dbf, Y_{\dbf},Z_{d_{r_i}})$.

\end{itemize}

The worst-case probability of error of a CACM scheme is given by 
\begin{align} \label{perr}
& P_e^{(n)} = \max_{\dbf} \;\max_{r_i}\;  \PP\left(\widehat{X}_{{d_{r_i}}}^n  \neq X_{{d_{r_i}}}^n \right).
\end{align}
In this paper, we focus on the peak multicast rate, corresponding to the worst-case demand,
\begin{equation} \label{average-rate}
R^{(n)} =  \max_{\dbf} \; \frac{\EE[L(Y_{\dbf})]}{n},
\end{equation}
where $L(Y)$ denotes the length (in bits) of the multicast codeword $Y$, and the expectation is over the library files. 

\begin{definition} \label{def:achievable-rate}
A peak rate-memory pair $(R,M)$ is {\em achievable} if there exists a sequence of CACM schemes for cache capacity $M$ and
increasing block length $n$, such that $\lim_{n \rightarrow \infty} P_e^{(n)} = 0 \notag$, and $\limsup_{n \rightarrow \infty}
R^{(n)} \leq  R$.
\end{definition}
\begin{definition} \label{def:infimum-rate}
The peak rate-memory region, $\regop$, is the closure of the set of achievable peak rate-memory pairs $(R,M)$, and the optimal peak rate-memory function is 
$$\Rstar= \inf \{R:  (R,M) \in  \regop\}.$$
\end{definition}

\section{Gray-Wyner CACM  Scheme }\label{sec: genei scheme}

In this section, we describe a class of schemes based on a two-step lossless source coding setup, as depicted in Fig.~\ref{fig:caching schemes}. The first step involves compressing the library via Gray-Wyner 
source coding,  
and the second step is a correlation-unaware multiple-request CACM scheme. 
We refer to this scheme as {\em Gray-Wyner Cache-Aided Coded Multicast}  (GW-CACM).

Gray-Wyner  source coding, depicted in Fig.  \ref{fig:caching schemes}, is a distributed lossless source coding setup in which a 2-DMS $(\Xsf_1, \Xsf_2 )$ is represented by three descriptions $ \{\Wsf_0, \Wsf_1, \Wsf_2\}$, where $\Wsf_0 \in [1:2^{nR_0})$, $\Wsf_1 \in [1:2^{nR_1})$, and $\Wsf_2 \in [1:2^{nR_2})$. File ${d_{r_1}}$ can be losslessly recovered from descriptions $(\Wsf_0, \Wsf_{d_{r_1}})$,  and file $d_{r_2}$ can be losslessly recovered from descriptions
 $(\Wsf_0, \Wsf_{d_{r_2}})$, both asymptotically, as $n \rightarrow \infty$.

As shown in \cite{wyner1975common}, the Gray-Wyner rate region is the closure of the union  over $\Usf$ of $\GWregion(\Usf)$, where
$\GWregion(\Usf)$ denotes
the set of rate triplets $(R_0, R_1, R_2)$ such that
\begin{eqnarray}
\label{eq:gwregion}
R_0 & \geq  & I(\Xsf_1, \Xsf_2;\Usf), \label{eq:gwregion1} \\
R_1  & \geq & H(\Xsf_1|\Usf),   \label{eq:gwregion2} \\
R_2 & \geq  & H(\Xsf_2|\Usf),   \label{eq:gwregion3}
\end{eqnarray}
given a conditional {pmf} $p(u|x_1, x_2)$ with $|\Uc| \leq |\Xc_1|.|\Xc_2| + 2$.
 
For a given  $\Usf$ and  a rate triplet $(R_0, R_1, R_2) \in \GWregion(\Usf)$,  a
 GW-CACM scheme consists of:

\begin{figure}[t!] \centering
 \includegraphics[width=1\linewidth]{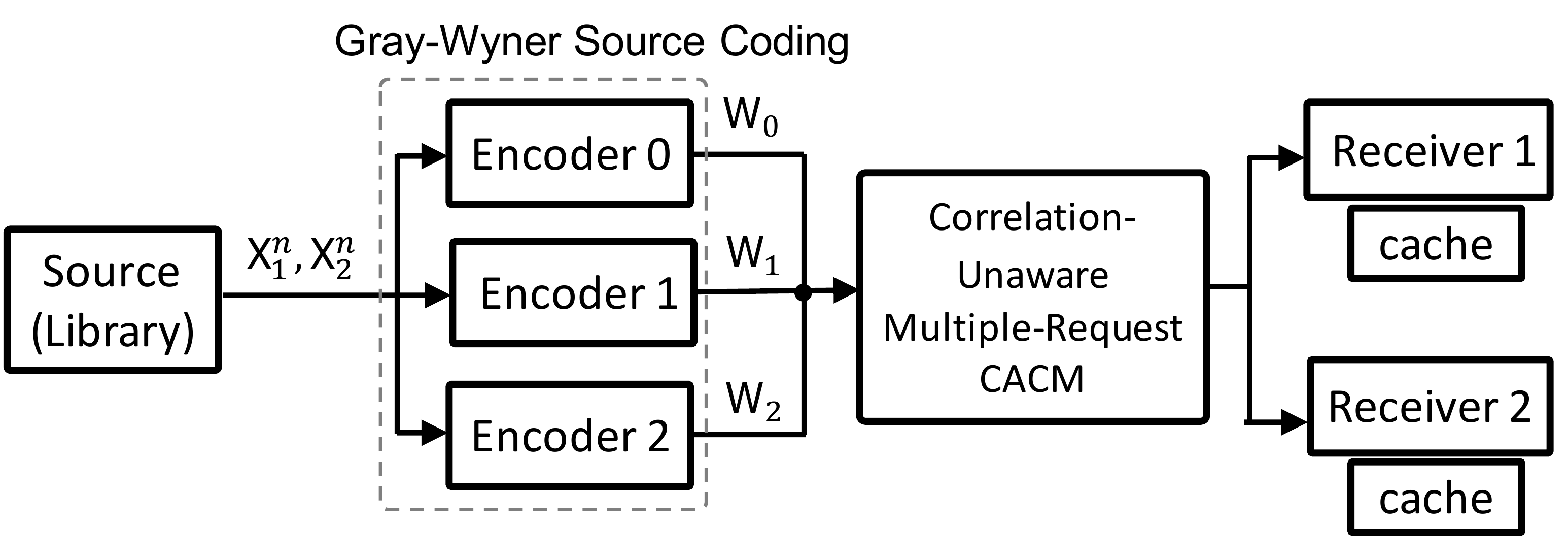}
\caption{Gray-Wyner CACM scheme, composed of a first Gray-Wyner source coding step, and a second correlation-unaware multiple-request CACM step.}
\label{fig:caching schemes}
\end{figure}

 \begin{itemize}
\item {\bf Gray-Wyner Encoder}: Given a library realization $\{X_1^{n}$, $X_2^{n}\}$,
the Gray-Wyner encoder at the sender
    computes three descriptions $ \{\Wsf_0, \Wsf_1, \Wsf_2\}$ using a mapping
    ${f}^{GW}: \Xc_1^{n}\times\Xc_2^{n} \rightarrow [1:2^{nR_0})\times[1:2^{nR_1})\times[1:2^{nR_2})$.

\item {\bf Correlation-Unaware Cache Encoder}: Given the descriptions $\{\Wsf_0, \Wsf_1, \Wsf_2\}$, the cache
    encoder at the sender computes the Gray-Wyner based cache contents $$Z_{r_i} =  f_{r_i}^{{\mathfrak C_{GW}}}
    (\Wsf_0, \Wsf_1 , \Wsf_2), \quad  r_i \in \{1,2\}.$$

\item {\bf Correlation-Unaware Multicast Encoder}: For any demand realization $\dbf$ revealed to the sender, the Gray-Wyner based multicast encoder generates and transmits the multicast codeword
$$Y_\dbf^{{{GW}}} = f^{{\mathfrak M_{GW}}}(\dbf,\{Z_{r_1},Z_{r_2}\}, \{\Wsf_0, \Wsf_1, \Wsf_2\}).$$

\item {\bf Multicast Decoder}: Receiver $r_i$ decodes the descriptions corresponding to its requested file as
$$\{\widehat{\Wsf}_0, \widehat{\Wsf}_{d_{r_i}}\} = g^{{\mathfrak M_{GW}}}_{r_i}(\dbf, Y_\dbf^{{{GW}}}, Z_i).$$

\item {\bf Gray-Wyner Decoder}: Receiver $r_i$ decodes its requested file using the descriptions recovered by the multicast
    decoder, via a mapping $g^{GW}_{r_i}:  [1:2^{nR_0})\times[1:2^{nR_{d_i}}) \rightarrow \Xc_i^n$, as
    $$\widehat{X}_{d_i}^{n} = g^{GW}_{r_i}(\widehat{\Wsf}_0, \widehat{\Wsf}_{d_{r_i}}).$$
\end{itemize}
 
Notice that for the class of GW-CACM schemes, since $(R_0, R_1, R_2) \in \GWregion(\Usf)$ and $(\Wsf_0, \Wsf_{d_{r_i}})$ is a Gray-Wyner description of $X_{{d_{r_i}}}^n$, in order to have
$\lim_{n \rightarrow \infty}  P_e^{(n)}= 0$, with $P_e^{(n)}$ as defined in \eqref{perr},
we only need
\begin{align}
\lim_{n \rightarrow \infty} \max_{\dbf} \;\max_{r_i}\;  \PP\Big( (\widehat{\Wsf}_0, \widehat{\Wsf}_{d_{r_i}}) \neq ({\Wsf}_0, {\Wsf}_{d_{r_i}} )  \Big) = 0. \notag
\end{align}
 
As in \eqref{average-rate},  the peak GW-CACM multicast rate is
\begin{equation} \label{average-rate-GW}
R_{GW}^{(n)}(R_0,R_1,R_2) = \max_{\dbf} \;  \frac{\EE[L(Y^{GW}_{\dbf})]}{n},
\end{equation}
where we explicitly show the dependence on $(R_0,R_1,R_2)$.

In line with Definitions \ref{def:achievable-rate} and \ref{def:infimum-rate}, for a given $\Usf$, the peak $\Usf$-rate-memory region for the class of GW-CACM schemes, $\regopGW(\Usf)$, is defined as the closure of the union of all the achievable pairs $\Big(R_{GW}^{(n)}(R_0,R_1,R_2),M\Big)$ with $(R_0, R_1, R_2)$$ \,\in \,$$\GWregion(\Usf)$. Analogously, the peak $\Usf$-rate-memory function of GW-CACM, $\RGWstarU$, is defined as
$\RGWstarU= \inf \{R:  (R,M) \in  \regopGW(\Usf)\}$.
 
 We remark that $\RGWstarU$ is the rate achieved by a GW-CACM scheme with the Gray-Wyner encoder operating at the boundary of the region
 $\GWregion(\Usf)$. Finally, optimizing over the choice of $\Usf$, we obtain the peak rate-memory region, $\regopGW$,  and  the peak rate-memory function, $\RGWstar$, as
$$\regopGW = cl\Big\{ \bigcup \,\regopGW(\Usf) \Big\}, \;\RGWstar= \inf\RGWstarU,$$
where $cl\{ \mathcal S \}$ denotes  the closure of $ \mathcal S$, and
the union and infimum are over all choices of $\Usf$ with $|\Uc| \leq |\Xc_1|.|\Xc_2| + 2.$

\section{Lower bounds}~\label{sec:Lower bound}
In this section, we provide lower bounds for  $\Rstar$, the optimal peak rate-memory function, and  $\RGWstarU$, the  peak $\Usf$-rate-memory function of GW-CACM for a given $\Usf$.  The latter bound can also be used to obtain a lower bound for $\RGWstar$. We then  investigate conditions on the cache capacity $M$ under which the lower bounds for $\Rstar$ and $\RGWstar$  meet. These conditions 
are then used in Section~\ref{sec:achievable scheme} 
in order to establish the optimality of GW-CACM,  and quantify 
the rate gap from the lower bound 
as a function of the cache capacity $M$.

\subsection{Lower bound on $\Rstar$}
\begin{theorem}\label{thm:LowerBound}
For a broadcast caching network with two receivers, cache capacity $M$, and a library composed of two files with joint distribution $p(x_1,x_2)$,
a lower bound on $\Rstar$, the optimal peak rate-memory function, is given by
\begin{align}
&\Rlb  =  \inf \Big \{ R:   \notag\\
&\qquad R \geq H(\Xsf_1, \Xsf_2)- 2M,\notag\\
&\qquad R \geq \frac{1}{2}\Big(H(\Xsf_1, \Xsf_2)-M\Big), \notag\\
&\qquad R \geq \frac{1}{2}\Big( H(\Xsf_1, \Xsf_2) +   \max\Big\{H(\Xsf_1), H(\Xsf_2) \Big\}\Big) -M \Big \}. \notag
\end{align}
\end{theorem}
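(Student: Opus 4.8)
\medskip
\noindent\textbf{Proof plan.}
The plan is to show that every achievable pair $(R,M)$ satisfies the three displayed inequalities, which immediately gives $R\ge\Rlb$ and hence $\Rstar\ge\Rlb$. So fix a sequence of CACM schemes with $P_e^{(n)}\to0$ and $\limsup_n R^{(n)}\le R$, and let $\epsilon_n\to0$ denote the resulting Fano term. I will repeatedly use three elementary facts: (i) whenever receiver $r_i$ decodes correctly under demand $\dbf$, $H\!\left(X_{d_{r_i}}^n\mid Z_{r_i},Y_\dbf\right)\le n\epsilon_n$; (ii) $H(Y_\dbf)\le nR^{(n)}+o(n)$ for every $\dbf$ and $H(Z_{r_i})\le nM$; and (iii) for a fixed $\dbf$ every $Y_\dbf$ and $Z_{r_i}$ is a deterministic function of $(X_1^n,X_2^n)$, so any collection of caches and transmissions from which both files can be reconstructed has joint entropy at least $nH(\Xsf_1,\Xsf_2)-o(n)$. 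These are precisely the ingredients behind a cut-set argument on the cache-demand augmented graph, each inequality corresponding to one cut; everything below is meant up to $o(n)$, after which one divides by $n$ and lets $n\to\infty$.

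For the first inequality I take the demand $\dbf=(1,2)$: receiver $1$ reconstructs $X_1^n$ from $(Z_{r_1},Y_\dbf)$ and receiver $2$ reconstructs $X_2^n$ from $(Z_{r_2},Y_\dbf)$, so $(Z_{r_1},Z_{r_2},Y_\dbf)$ is jointly sufficient and, by (i)--(iii), $nH(\Xsf_1,\Xsf_2)\le H(Z_{r_1})+H(Z_{r_2})+H(Y_\dbf)+o(n)\le 2nM+nR^{(n)}+o(n)$. For the second inequality I take the two demands $\dbf_1=(1,2)$, $\dbf_2=(2,1)$: receiver $1$ reconstructs $X_1^n$ from $(Z_{r_1},Y_{\dbf_1})$ and $X_2^n$ from $(Z_{r_1},Y_{\dbf_2})$, so $(Z_{r_1},Y_{\dbf_1},Y_{\dbf_2})$ is jointly sufficient and $nH(\Xsf_1,\Xsf_2)\le H(Z_{r_1})+H(Y_{\dbf_1})+H(Y_{\dbf_2})+o(n)\le nM+2nR^{(n)}+o(n)$.

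The third inequality is the technical heart, and I expect the bookkeeping there to be the main obstacle: it is strictly stronger than any convex combination of the first two together with the elementary one-file cut $R+M\ge\max\{H(\Xsf_1),H(\Xsf_2)\}$ (for instance it sharpens the $M$-coefficient from $\tfrac32$ to $1$), so a plain cut-set/Fano step cannot give it -- naively one ends up charging some cache or transmission twice. By the symmetry of the two receivers it suffices to prove $2R+2M\ge H(\Xsf_1,\Xsf_2)+H(\Xsf_1)$ under $H(\Xsf_1)\ge H(\Xsf_2)$. The approach is a refined genie argument in the spirit of the improved converses for coded caching~\cite{ghasemi2015improved}: one fixes a pair of demands (candidates are $(1,2),(2,1)$, or $(1,1),(2,1)$) whose transmissions $(Y_1,Y_2)$ are such that, on one hand, $(Z_{r_1},Y_1,Y_2)$ reconstructs the whole library (contributing $nH(\Xsf_1,\Xsf_2)$) and, on the other, $Z_{r_2}$ together with one of the two transmissions reconstructs $X_1^n$ (contributing an extra $nH(\Xsf_1)$). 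Expanding $H(X_1^n)+H(X_1^n,X_2^n)$ through these two reconstructions produces an upper bound involving $H(Z_{r_1}),H(Z_{r_2}),H(Y_1),H(Y_2)$ in which a transmission and the cache $Z_{r_2}$ a priori occur twice; the crux is to absorb those repetitions using chain-rule and submodularity identities -- in particular $H(W\mid A)+H(W\mid B)\le H(W)+H(W\mid A,B)$ -- possibly after adjoining an auxiliary conditioning chosen so that a residual cache entropy becomes negligible for the chosen demands, so that in the end each of $Z_{r_1},Z_{r_2},Y_1,Y_2$ is charged exactly once. This yields $2R^{(n)}+2M\ge H(\Xsf_1,\Xsf_2)+H(\Xsf_1)-o(1)$.

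Finally, passing to the limit $n\to\infty$ in the three inequalities shows that every achievable $(R,M)$ obeys the three constraints defining $\Rlb$, hence $R\ge\Rlb$ and $\Rstar\ge\Rlb$. The only genuinely delicate step is the submodularity/bookkeeping argument behind the third inequality; the first two are routine cut-set/Fano computations.
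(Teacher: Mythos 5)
Your proofs of the first two inequalities are correct and are the standard cut-set/Fano computations; they line up with the paper's graph cut-set framework (the paper delegates to~\cite{llorca2013network,llorca2013minimum}, but the entropy argument is equivalent). The genuine gap is the third inequality, where you correctly observe that a naive combination double-charges a cache or transmission, propose the right candidate demand pairs, and then stop short of an actual derivation. Two issues with the sketch: first, the ``submodularity identity'' you quote, $H(W\mid A)+H(W\mid B)\le H(W)+H(W\mid A,B)$, is \emph{false} in general (take $A,B$ independent uniform bits and $W=A\oplus B$: the left side is $2$, the right side is $1$), so it cannot be the tool; second, the decomposition you describe -- charging $(Z_{r_1},Y_1,Y_2)$ for $H(\Xsf_1,\Xsf_2)$ and $(Z_{r_2},Y_i)$ for $H(\Xsf_1)$ -- leads to $3R+2M$ or $2R+3M$, not $2R+2M$, and you never explain how the extra factor disappears.

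For completeness, here is a way to close the gap, along the lines you were gesturing at but with the bookkeeping done so that each of $Z_{r_1},Z_{r_2},Y_1,Y_2$ is used exactly once. Assume $H(\Xsf_1)\ge H(\Xsf_2)$, set $Y_1\eqdef Y_{(1,2)}$ and $Y_2\eqdef Y_{(2,1)}$, and write everything up to $n\epsilon_n$ terms. Since $(Z_{r_1},Y_1)$ recovers $X_1^n$ and $(Z_{r_2},Y_2)$ recovers $X_1^n$,
\begin{align}
H(Z_{r_1},Y_1) &\ge H(X_1^n)+H(Y_1\mid X_1^n)-n\epsilon_n,\notag\\
H(Z_{r_2},Y_2) &\ge H(X_1^n)+H(Z_{r_2}\mid X_1^n)-n\epsilon_n,\notag
\end{align}
where in the first line $Z_{r_1}$ is dropped from the residual and in the second $Y_2$ is dropped. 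Adding, and using $H(Y_1\mid X_1^n)+H(Z_{r_2}\mid X_1^n)\ge H(Y_1,Z_{r_2}\mid X_1^n)$, together with the fact that $(Z_{r_2},Y_1)$ recovers $X_2^n$ so that $H(Y_1,Z_{r_2}\mid X_1^n)\ge H(X_2^n\mid X_1^n)-n\epsilon_n$, one gets
\begin{align}
2nM+2nR^{(n)}&\ge H(Z_{r_1},Y_1)+H(Z_{r_2},Y_2)\notag\\
&\ge 2H(X_1^n)+H(X_2^n\mid X_1^n)-3n\epsilon_n\notag\\
&=H(X_1^n)+H(X_1^n,X_2^n)-3n\epsilon_n,\notag
\end{align}
which after normalization gives $2R+2M\ge H(\Xsf_1,\Xsf_2)+H(\Xsf_1)$. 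The symmetric argument gives the bound with $H(\Xsf_2)$, and the maximum yields the third constraint. Note the key choice: which variable to \emph{keep} in each residual (keep $Y_1$ from the $r_1$-cut, keep $Z_{r_2}$ from the $r_2$-cut) so that the surviving pair $(Y_1,Z_{r_2})$ recovers the second file given the first. That selection is exactly what your sketch leaves unspecified, and without it the bound degrades.
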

\begin{proof}
Theorem \ref{thm:LowerBound} follows from combining cut-set bounds on $i$) 
the cache-demand-augmented graph, 
 and $ii$)  
the time-replication of the cache-demand-augmented graph as described in \cite{llorca2013network,llorca2013minimum}.
\end{proof}

\begin{remark} 
The outer bound in Theorem \ref{thm:LowerBound}  
improves the best known bound for correlated sources given in \cite[Theorem 2]{lim2016information}, and when particularized to independent sources, matches the corresponding best known bound derived in \cite{ghasemi2015improved}. 
\end{remark}

\subsection{Lower bounds on $\RGWstarU$ and $\RGWstar$}
\begin{theorem}\label{thm:LowerBoundGW}
For a given $\Usf$, a lower bound on $\RGWstarU$, the peak $\Usf$-rate-memory function of the GW-CACM scheme, is given by

\vspace{-0.2cm}
\begin{align}
&\RGWUlb = \inf \Big \{ R: \notag\\
& \qquad\quad R \geq I(\Xsf_1, \Xsf_2;\Usf) + H(\Xsf_1|\Usf) + H(\Xsf_2|\Usf) - 2M , \notag\\
& \qquad\quad R \geq \frac{1}{2}\Big (I(\Xsf_1, \Xsf_2;\Usf) + H(\Xsf_1|\Usf) + H(\Xsf_2|\Usf) -M \Big), \notag\\
& \qquad\quad R \geq  I(\Xsf_1, \Xsf_2;\Usf) + H(\Xsf_1|\Usf) + \frac{1}{2} H(\Xsf_2|\Usf) -M,\notag\\
& \qquad\quad R \geq  I(\Xsf_1, \Xsf_2;\Usf) + \frac{1}{2}H(\Xsf_1|\Usf) + H(\Xsf_2|\Usf) -M \notag
\Big \}.  
\end{align}
\end{theorem}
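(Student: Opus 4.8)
The plan is to convert the converse into a cut-set analysis of the caching sub-problem induced by the Gray-Wyner descriptions, reusing the argument behind Theorem~\ref{thm:LowerBound}.

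Fix any GW-CACM scheme with Gray-Wyner rates $(R_0,R_1,R_2)\in\GWregion(\Usf)$ and block length $n$. Its second stage is a cache-aided coded multicast scheme whose ``library'' is the triple $\Wsf_0\in[1:2^{nR_0})$, $\Wsf_1\in[1:2^{nR_1})$, $\Wsf_2\in[1:2^{nR_2})$, and in which a receiver demanding file $j\in\{1,2\}$ must recover the pair $(\Wsf_0,\Wsf_j)$ from its cache and the transmitted codeword; thus $\Wsf_0$ acts as a common sub-file that every receiver needs regardless of $\dbf$, while $\Wsf_1,\Wsf_2$ are private. Because this stage is correlation-unaware---it treats $\{\Wsf_0,\Wsf_1,\Wsf_2\}$ as an ordinary caching library, i.e.\ as mutually independent messages uniform over their supports---the converse bounds for independent messages apply to it, and in particular the entropy of any subcollection of the descriptions equals $n$ times the sum of the corresponding rates. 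This independence bookkeeping is exactly what makes $\RGWUlb$ strictly exceed $\Rlb$ in general: without the correlation-unaware restriction one could only assert $H(\Wsf_0,\Wsf_1,\Wsf_2)=nH(\Xsf_1,\Xsf_2)+o(n)$.

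I then apply the two families of cuts used for Theorem~\ref{thm:LowerBound} to the cache-demand augmented graph of this three-sub-file problem and to its time-replication, together with Fano's inequality (using that the GW-CACM decoders are asymptotically lossless). (i) One codeword for $\dbf=(1,2)$ together with both caches recovers $(\Wsf_0,\Wsf_1)$ at $r_1$ and $(\Wsf_0,\Wsf_2)$ at $r_2$, hence all three descriptions, which gives $R\ge R_0+R_1+R_2-2M-o(1)$. (ii) Two codewords for demands with first components $1$ and $2$, together with the single cache $Z_{r_1}$, recover $(\Wsf_0,\Wsf_1)$ and $(\Wsf_0,\Wsf_2)$, hence all three descriptions, which gives $2R\ge R_0+R_1+R_2-M-o(1)$. (iii) Two codewords with both caches under the asymmetric two-epoch demands in which file $1$ (resp.\ file $2$) is requested in both epochs while file $2$ (resp.\ file $1$) is requested only once: arguing as in the improved cut-set bounds of \cite{ghasemi2015improved,llorca2013minimum}, the cut accounts for $\Wsf_0$ and the twice-requested private description with multiplicity two and for the remaining private description with multiplicity one, yielding $2R\ge 2R_0+2R_1+R_2-2M-o(1)$ and $2R\ge 2R_0+R_1+2R_2-2M-o(1)$.

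Finally, $(R_0,R_1,R_2)\in\GWregion(\Usf)$ forces $R_0\ge I(\Xsf_1,\Xsf_2;\Usf)$, $R_1\ge H(\Xsf_1|\Usf)$, and $R_2\ge H(\Xsf_2|\Usf)$, so each of the four inequalities above implies the corresponding inequality in the theorem statement, uniformly in the rate triplet and in $n$. Letting $n\to\infty$ with $P_e^{(n)}\to0$ removes the $o(1)$ terms, and taking the infimum over $\GWregion(\Usf)$ and over all GW-CACM schemes yields $\RGWstarU\ge\RGWUlb$. I expect the main obstacle to be step (iii): selecting the demand sequences and the cuts in the time-replicated augmented graph so that the common description $\Wsf_0$---which, unlike the files of an ordinary caching problem, is requested under every demand and hence by every replicated receiver copy---is counted with exactly the multiplicity the chosen cut justifies. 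A secondary point is to make the correlation-unaware restriction precise enough to legitimize treating the three descriptions as mutually independent uniform messages in the entropy bookkeeping.
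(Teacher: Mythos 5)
Your proposal is correct and follows essentially the same route as the paper, whose proof is just a one-line pointer ("similar to Theorem~\ref{thm:LowerBound}, applied to Gray-Wyner descriptions at rates in $\GWregion(\Usf)$"): you apply the same two families of cut-set bounds on the augmented and time-replicated augmented graphs, now to the three-description library $\{\Wsf_0,\Wsf_1,\Wsf_2\}$, and then push through the Gray-Wyner rate constraints $R_0\geq I(\Xsf_1,\Xsf_2;\Usf)$, $R_j\geq H(\Xsf_j|\Usf)$. You also usefully flag the two points the paper leaves implicit---the multiplicity bookkeeping for the always-requested $\Wsf_0$ in the Ghasemi-style cut, and the role of the correlation-unaware restriction in justifying the independent-uniform treatment of the descriptions.
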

\begin{proof}
The proof is similar to that of Theorem  \ref{thm:LowerBound}; now applied to Gray-Wyner descriptions at rates  $(R_0, R_1, R_2)$$ \,\in \,$$\GWregion(\Usf)$.
\end{proof}
 
\begin{corollary}
A lower bound on $\RGWstar$, the peak rate-memory function of GW-CACM, is given by
 $$\RGWlb = \inf\,\RGWUlb,  
 $$
 \end{corollary}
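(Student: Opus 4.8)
The plan is to read the corollary off directly from Theorem~\ref{thm:LowerBoundGW} by invoking monotonicity of the infimum over the Gray-Wyner auxiliary variable. First, fix a cache capacity $M$. By the definitions given in Section~\ref{sec: genei scheme}, we have $\RGWstar = \inf_{\Usf} \RGWstarU$ and $\RGWlb = \inf_{\Usf} \RGWUlb$, where in both cases the infimum is taken over the same family of auxiliary random variables $\Usf$ with $|\Uc| \leq |\Xc_1|\cdot|\Xc_2| + 2$. Theorem~\ref{thm:LowerBoundGW} supplies, for each admissible $\Usf$, the pointwise inequality $\RGWstarU \geq \RGWUlb$. Chaining this with the trivial bound $\RGWUlb \geq \inf_{\Usf} \RGWUlb = \RGWlb$, we get $\RGWstarU \geq \RGWlb$ for every admissible $\Usf$; taking the infimum over $\Usf$ on the left-hand side then yields $\RGWstar \geq \RGWlb$, which is exactly the claimed bound. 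No other ingredient is needed.

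There is essentially no obstacle here, since all of the information-theoretic content is already contained in Theorem~\ref{thm:LowerBoundGW}; the corollary merely records that the per-$\Usf$ bounds aggregate by optimizing the auxiliary variable, in the same way that the achievable function $\RGWstar$ is itself defined as an infimum over $\Usf$. The only point worth a brief remark is that $\RGWstar$ is originally introduced through the closure $cl\big\{\bigcup_\Usf \regopGW(\Usf)\big\}$ rather than as a bare infimum of functions; but each region $\regopGW(\Usf)$ is upward-closed in both $R$ and $M$ (increasing the delivered rate, or the cache size, cannot destroy achievability), so passing to the closure does not alter the infimal rate at any fixed $M$. Hence the identity $\RGWstar = \inf_\Usf \RGWstarU$ used in the argument — and stated in Section~\ref{sec: genei scheme} — is legitimate, and the chain of inequalities above completes the proof.
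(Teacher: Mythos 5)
Your argument is correct and matches the paper's implicit reasoning: the corollary follows immediately from Theorem~\ref{thm:LowerBoundGW} by taking the infimum over $\Usf$ on both sides, using the definition $\RGWstar = \inf_\Usf \RGWstarU$ given in Section~\ref{sec: genei scheme}. The paper does not even supply a separate proof for this corollary precisely because the step is this direct, and your side remark about the closure not affecting the infimal rate at fixed $M$ is sound though not strictly needed, since the paper defines $\RGWstar$ as an infimum rather than extracting it from $\regopGW$.
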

 where the infimum is over all choices of $\Usf$ with $|\Uc| \leq |\Xc_1|.|\Xc_2| + 2.$

\subsection{Where $\RGWlb$  and $\Rlb$ meet}
By comparing the lower bounds in Theorems~\ref{thm:LowerBound} and \ref{thm:LowerBoundGW}, it is easy to see that  $\Rlb \leq \RGWUlb$, and hence, $\Rlb \leq \RGWlb$. In the following, we derive conditions under which  $\Rlb = \RGWlb$.

\begin{theorem}\label{thm:lower bounds coincide}
Let
\begin{align}
& M_1 \triangleq  \max_{X_1 - U - X_2} \frac{1}{2}  \min \Big\{H(\Xsf_1|\Usf),H(\Xsf_2|\Usf) \Big\}. \notag  
\end{align}
 
 Then, for  $M\in [ 0, \, M_1] \cup [ H(\Xsf_1, \Xsf_2)- 2 M_1 , \,  H(\Xsf_1, \Xsf_2) ] $,  we have $\RGWlb =\Rlb$.

 \end{theorem}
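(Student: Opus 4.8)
The plan is to prove the two inequalities $\Rlb \le \RGWlb$ and $\RGWlb \le \Rlb$ separately. The first already holds for every $M$ (it is the term‑by‑term comparison of Theorems~\ref{thm:LowerBound} and \ref{thm:LowerBoundGW} observed just before the statement), so it suffices to exhibit, for each $M$ in the stated set, an admissible auxiliary $\Usf$ with $\RGWUlb \le \Rlb$; then $\RGWlb \le \RGWUlb \le \Rlb$ and equality follows.

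First I would restrict attention to auxiliary variables satisfying the Markov chain $\Xsf_1 - \Usf - \Xsf_2$. Under this constraint $H(\Xsf_1,\Xsf_2|\Usf)=H(\Xsf_1|\Usf)+H(\Xsf_2|\Usf)$, hence $I(\Xsf_1,\Xsf_2;\Usf)+H(\Xsf_1|\Usf)+H(\Xsf_2|\Usf)=H(\Xsf_1,\Xsf_2)$, and also $I(\Xsf_1,\Xsf_2;\Usf)+H(\Xsf_1|\Usf)=H(\Xsf_1,\Xsf_2)-H(\Xsf_2|\Usf)$ and symmetrically. Substituting into the four defining inequalities of $\RGWUlb$ in Theorem~\ref{thm:LowerBoundGW}, the first two become \emph{identical} to the first two inequalities of $\Rlb$ in Theorem~\ref{thm:LowerBound}, while the last two collapse to $R\ge H(\Xsf_1,\Xsf_2)-M-\tfrac12 H(\Xsf_2|\Usf)$ and $R\ge H(\Xsf_1,\Xsf_2)-M-\tfrac12 H(\Xsf_1|\Usf)$. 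Thus, for such $\Usf$, the feasible set defining $\RGWUlb$ is governed by the three right‑hand sides $H(\Xsf_1,\Xsf_2)-2M$, $\tfrac12(H(\Xsf_1,\Xsf_2)-M)$, and $H(\Xsf_1,\Xsf_2)-M-\tfrac12\min\{H(\Xsf_1|\Usf),H(\Xsf_2|\Usf)\}$; only the last depends on $\Usf$, and it is minimized by choosing $\Usf$ to maximize $\min\{H(\Xsf_1|\Usf),H(\Xsf_2|\Usf)\}$, which by definition equals $2M_1$. Therefore
\[
\inf_{\Xsf_1-\Usf-\Xsf_2}\RGWUlb \;=\; \max\Big\{\, H(\Xsf_1,\Xsf_2)-2M,\; \tfrac{1}{2}\big(H(\Xsf_1,\Xsf_2)-M\big),\; H(\Xsf_1,\Xsf_2)-M-M_1 \,\Big\}.
\]

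It then remains to compare this with $\Rlb=\max\{H(\Xsf_1,\Xsf_2)-2M,\ \tfrac12(H(\Xsf_1,\Xsf_2)-M),\ \tfrac12(H(\Xsf_1,\Xsf_2)+\max\{H(\Xsf_1),H(\Xsf_2)\})-M\}$. Since the first two terms coincide, it suffices to check that $H(\Xsf_1,\Xsf_2)-M-M_1$ is dominated by the maximum of the first two terms of $\Rlb$. A one‑line computation gives exactly the stated regime: if $M\le M_1$ then $H(\Xsf_1,\Xsf_2)-2M \ge H(\Xsf_1,\Xsf_2)-M-M_1$; and if $M\ge H(\Xsf_1,\Xsf_2)-2M_1$ then $\tfrac12(H(\Xsf_1,\Xsf_2)-M)\ge H(\Xsf_1,\Xsf_2)-M-M_1$. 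In either case $\inf_{\Xsf_1-\Usf-\Xsf_2}\RGWUlb \le \Rlb$, so $\RGWlb\le\Rlb$, and combined with $\Rlb\le\RGWlb$ we conclude $\RGWlb=\Rlb$.

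The main point requiring care is the alphabet of the maximizer in the definition of $M_1$: that optimization ranges over all $\Usf$ with $\Xsf_1-\Usf-\Xsf_2$, whereas $\GWregion(\Usf)$ — and hence $\RGWlb$ — only admits $|\Uc|\le|\Xc_1|\,|\Xc_2|+2$, so the $\Usf$ used above must be shown to lie within this budget. I would settle this with a standard Fenchel–Eggleston–Carathéodory argument, which shows the maximum defining $M_1$ is attained by a $\Usf$ with bounded alphabet (in fact well below the stated bound, e.g.\ using $H(\Xsf_1|\Usf)+H(\Xsf_2|\Usf)\le H(\Xsf_1,\Xsf_2)$ under the Markov constraint), together with a routine convexity/continuity argument to justify operating on the boundary of $\GWregion(\Usf)$. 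I would also remark, for orientation, that in the complementary ``middle'' regime $M_1<M<H(\Xsf_1,\Xsf_2)-2M_1$ (nonempty precisely when $3M_1<H(\Xsf_1,\Xsf_2)$) neither domination holds, which is why the equality is claimed only on the union of the two intervals.
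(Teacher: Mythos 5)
Your proof is correct and follows essentially the same route as the paper's own argument: restrict to auxiliaries with $\Xsf_1-\Usf-\Xsf_2$, use the resulting identity $I(\Xsf_1,\Xsf_2;\Usf)+H(\Xsf_1|\Usf)+H(\Xsf_2|\Usf)=H(\Xsf_1,\Xsf_2)$ to collapse the four $\RGWUlb$ constraints against the three $\Rlb$ constraints, and pick the Markov $\Usf$ maximizing $\min\{H(\Xsf_1|\Usf),H(\Xsf_2|\Usf)\}$ so that the $\Usf$-dependent term is dominated exactly on $[0,M_1]\cup[H(\Xsf_1,\Xsf_2)-2M_1,H(\Xsf_1,\Xsf_2)]$. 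You supply more of the arithmetic than the paper's terse sketch, and your observation that the cardinality of the maximizer must be reconciled with the $|\Uc|\le|\Xc_1||\Xc_2|+2$ budget in $\GWregion(\Usf)$ is a legitimate point the paper leaves implicit.
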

\begin{proof}
Theorem \ref{thm:lower bounds coincide} follows from comparing $\Rlb$ with $\RGWUlb$ for a given $\Usf$, over different regions of memory $M$. It is observed that when 
\begin{align} 
M &\in \Big[ 0, \,\frac{1}{2}  \min \Big\{H(\Xsf_1|\Usf),H(\Xsf_2|\Usf) \Big\}\Big] \bigcup \notag\\
&\Big[H(\Xsf_1, \Xsf_2)- \min \Big\{H(\Xsf_1|\Usf),H(\Xsf_2|\Usf) \Big\} , \,  H(\Xsf_1, \Xsf_2) \Big], \notag
\end{align}
$\RGWUlb-\Rlb$ is independent from the cache capacity $M$, and becomes zero when $I(\Xsf_1, \Xsf_2; \Usf)+ H(\Xsf_1| \Usf)+ H(\Xsf_2| \Usf) = H(X_1, X_2)$.
For the choice of $\Usf$ used to obtain $M_1$, the region of memory over which the two bounds meet is maximized. 
 \end{proof}

\begin{remark}
The Markov chain  $X_1 - U - X_2$ in Theorem \ref{thm:lower bounds coincide} suggests that for the rate triplet $(R_0, R_1, R_2)$$ \,\in \,$$\GWregion(\Usf)$ used in GW-CACM, we require $R_0+R_1+R_2 = H(X_1, X_2)$. The same Markov chain is also used to define Wyner's common information \cite{wyner1975common}. While in Wyner's common information the goal is to minimize $R_0$ subject to $R_0+R_1+R_2 = H(X_1, X_2)$, for $M_1$ in Theorem \ref{thm:lower bounds coincide}, the goal is to maximize $\min(R_1,R_2)$.
\end{remark}

\begin{corollary}
\label{completoverlap} 
If the file library $(X_1,X_2)$ is such that
$$\Big( I(\Xsf_1; \Xsf_2), H(\Xsf_1|\Xsf_2),H(\Xsf_2|\Xsf_1) \Big) \in cl\Big\{ \bigcup \,\GWregion (\Usf)\Big\}, $$
where the  union is over all choices of $\Usf$ with $|\Uc| \leq |\Xc_1|.|\Xc_2| + 2,$  then $\Rlb= \RGWlb$.

\end{corollary}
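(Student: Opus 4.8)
The plan is to produce a single auxiliary variable $\Usf^\star$ for which the Gray-Wyner lower bound of Theorem~\ref{thm:LowerBoundGW} coincides with the optimal lower bound of Theorem~\ref{thm:LowerBound} \emph{for every} $M$, and then to sandwich $\RGWlb$ between $\Rlb$ and $R_{GW}^{LB}(M,\Usf^\star)$. First I would extract $\Usf^\star$ from the hypothesis. Every triple in $\GWregion(\Usf)$ has coordinate sum at least $I(\Xsf_1,\Xsf_2;\Usf)+H(\Xsf_1|\Usf)+H(\Xsf_2|\Usf)\geq I(\Xsf_1,\Xsf_2;\Usf)+H(\Xsf_1,\Xsf_2|\Usf)=H(\Xsf_1,\Xsf_2)$, and this persists on the closure. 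Since $I(\Xsf_1;\Xsf_2)+H(\Xsf_1|\Xsf_2)+H(\Xsf_2|\Xsf_1)=H(\Xsf_1,\Xsf_2)$, the point in the hypothesis lies on the dominant face $R_0+R_1+R_2=H(\Xsf_1,\Xsf_2)$. Moreover $p(u|x_1,x_2)\mapsto\big(I(\Xsf_1,\Xsf_2;\Usf),H(\Xsf_1|\Usf),H(\Xsf_2|\Usf)\big)$ is continuous on the compact set of conditional pmfs with $|\Uc|\leq|\Xc_1|\cdot|\Xc_2|+2$, so $\bigcup_{\Usf}\GWregion(\Usf)$ is the Minkowski sum of a compact set with $\mathbb{R}_+^3$ and hence already closed; therefore there is a genuine $\Usf^\star$ with $I(\Xsf_1,\Xsf_2;\Usf^\star)\leq I(\Xsf_1;\Xsf_2)$, $H(\Xsf_1|\Usf^\star)\leq H(\Xsf_1|\Xsf_2)$ and $H(\Xsf_2|\Usf^\star)\leq H(\Xsf_2|\Xsf_1)$. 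Summing these three and invoking the sum bound above forces all three to hold with equality; in particular $I(\Xsf_1,\Xsf_2;\Usf^\star)+H(\Xsf_1|\Usf^\star)+H(\Xsf_2|\Usf^\star)=H(\Xsf_1,\Xsf_2)$, i.e.\ $\Xsf_1-\Usf^\star-\Xsf_2$.

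Next I would substitute $\Usf^\star$ into Theorem~\ref{thm:LowerBoundGW}. Its first two constraints, $R\geq I(\Xsf_1,\Xsf_2;\Usf^\star)+H(\Xsf_1|\Usf^\star)+H(\Xsf_2|\Usf^\star)-2M$ and its halved version, collapse to $R\geq H(\Xsf_1,\Xsf_2)-2M$ and $R\geq\frac{1}{2}(H(\Xsf_1,\Xsf_2)-M)$, i.e.\ the first two constraints of $\Rlb$. For the last two constraints, assume without loss of generality $H(\Xsf_1)\leq H(\Xsf_2)$, equivalently $H(\Xsf_1|\Xsf_2)\leq H(\Xsf_2|\Xsf_1)$; then the binding one is $R\geq I(\Xsf_1,\Xsf_2;\Usf^\star)+\frac{1}{2}H(\Xsf_1|\Usf^\star)+H(\Xsf_2|\Usf^\star)-M$ (it dominates the other since $H(\Xsf_1|\Usf^\star)\leq H(\Xsf_2|\Usf^\star)$), and substituting the three equalities from the previous step together with the identity $H(\Xsf_2)=I(\Xsf_1;\Xsf_2)+H(\Xsf_2|\Xsf_1)$ turns its right-hand side into $\frac{1}{2}\big(H(\Xsf_1,\Xsf_2)+\max\{H(\Xsf_1),H(\Xsf_2)\}\big)-M$, which is exactly the third constraint of $\Rlb$. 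Hence the constraint set defining $R_{GW}^{LB}(M,\Usf^\star)$ is identical to the one defining $\Rlb$, so $R_{GW}^{LB}(M,\Usf^\star)=\Rlb$ for every $M$. Since $\Rlb\leq\RGWlb$ was already observed, and $\RGWlb=\inf_{\Usf}R_{GW}^{LB}(M,\Usf)\leq R_{GW}^{LB}(M,\Usf^\star)=\Rlb$, I would conclude $\RGWlb=\Rlb$.

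The step I expect to require the most care, though it is brief, is recognizing that for \emph{this} $\Usf^\star$ the equality $R_{GW}^{LB}=\Rlb$ holds over the \emph{entire} range of $M$, not merely in the two extreme-memory windows of Theorem~\ref{thm:lower bounds coincide}: this happens precisely because $\Usf^\star$ realizes $R_0=I(\Xsf_1;\Xsf_2)$ on the dominant face, which makes the third Gray-Wyner constraint collapse onto the third optimal-bound constraint as an identity in $M$. The closure/compactness observation in the first step is the other thing to get right, since it guarantees that $\Usf^\star$ is an honest maximizer rather than the limit of a sequence, but given the cardinality bound on $\Uc$ this is routine.
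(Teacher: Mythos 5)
Your proof is correct and self-contained. The paper defers this proof to an external reference, so there is no in-document argument to compare against, but your route is the natural one given the shape of Theorems~\ref{thm:LowerBound} and~\ref{thm:LowerBoundGW}: extract a single $\Usf^\star$ realizing the corner point $(I(\Xsf_1;\Xsf_2),H(\Xsf_1|\Xsf_2),H(\Xsf_2|\Xsf_1))$, show by the sum bound $R_0+R_1+R_2\geq H(\Xsf_1,\Xsf_2)$ that all three coordinates hit equality and hence $\Xsf_1-\Usf^\star-\Xsf_2$, and then verify that the four constraints of $\RGWUlb$ evaluated at $\Usf^\star$ reduce exactly to the three constraints of $\Rlb$ for every $M$. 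The closedness step (compact base set $+$ $\mathbb{R}_+^3$ is closed) is the right way to replace the closure hypothesis by an honest $\Usf^\star$, and the identity $\frac{1}{2}\big(H(\Xsf_1,\Xsf_2)+H(\Xsf_2)\big)=I(\Xsf_1;\Xsf_2)+\tfrac{1}{2}H(\Xsf_1|\Xsf_2)+H(\Xsf_2|\Xsf_1)$ that collapses the binding third/fourth constraint onto $\Rlb$'s third constraint checks out. The only cosmetic imprecision is calling the two constraint sets ``identical'' when $\RGWUlb$ nominally has four constraints and $\Rlb$ has three; what you actually show is that the extra one is dominated, so the feasible sets and hence the infima agree --- the conclusion stands.
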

\begin{proof}
See \cite{ParisaISIT17}.
\end{proof}

\begin{example}\label{ex:cover all}
Consider a 2-DMS whose joint pmf  $p(x_1, x_2)$ is such that $(\Xsf_1,\Xsf_2)$ can be represented as $\Xsf_1 = (\Xsf_1',\Vsf)$ and $\Xsf_2 =(\Xsf'_2,\Vsf)$, where $\Xsf'_1$ and $\Xsf'_2$ are conditionally independent given $\Vsf$. Taking $\Usf= \Vsf$, the point $R_0 =I(\Xsf_1,\Xsf_2;\Usf) = H(\Vsf) = I(\Xsf_1;\Xsf_2)$, $R_1=H(\Xsf_1|\Usf)= H(\Xsf'_1|\Vsf) = H(\Xsf_1|\Xsf_2)$ and $R_2=H(\Xsf_2|\Usf)= H(\Xsf'_2|\Vsf) = H(\Xsf_2|\Xsf_1)$ belongs to the Gray-Wyner rate region. Then, by Theorem  \ref{completoverlap}, $\Rlb= \RGWlb$ for any $M$.  
\end{example}

\section{An achievable scheme based on GW-CACM and its optimality} \label{sec:achievable scheme} 
In this section, we present an achievable GW-CACM scheme
, where $i$) the first step consists of a Gray-Wyner encoder restricted to operate on the  plane of the  Gray-Wyner
rate region with  $R_1=R_2=\private$, and
%
%
$ii$) the second step is a deterministic correlation-unaware multiple-request CACM scheme that combines ideas from conventional caching (e.g., LFU\footnote{LFU is a local caching policy that, in the setting of this paper, leads to all receivers caching the same part of the file.} and uncoded multicasting) 
and correlation-unaware CACM with coded placement, as suggested by Tian and Chen in 
\cite{tian2016caching}, and referred to as TC in the following. 
We remark that jointly optimizing these two steps is the key to maximizing the overall performance. 

We then refer to the overall scheme as GW-LFU-TC, which works as follows:

{\bf Gray-Wyner Encoder:} generates
three library descriptions $\{\Wsf_0, \Wsf_1, \Wsf_2\}$ using a conditional pmf $p(u|x_1,x_2)$ such that $p(x_1|u)=p(x_2|u)$ with $(R_0, \rho, \rho)$$ \,\in \,$$\GWregion(\Usf)$.

 {\bf Cache Encoder:} populates the receiver caches as:
\begin{itemize}
\item If $M\in [0, \private)$, the common description $\Wsf_0$ is not cached at either receiver, and descriptions $\{\Wsf_1, \Wsf_2\}$ are cached according to the caching phase of TC. 
\item If $M\in [\private, R_0+\private)$, the first $n(M-\private)$ bits of description $\Wsf_0$ are cached at both receivers (as per LFU caching), and descriptions $\{\Wsf_1, \Wsf_2\}$ are cached according to TC over the remaining cache capacity $\private$. 

\item If $M\in [R_0+\private, R_0+2\private]$, the common description $\Wsf_0$ is fully cached at both receivers, and descriptions $\{\Wsf_1, \Wsf_2\}$ are cached according to TC over the remaining cache capacity $M-R_0$. 
\end{itemize}

{\bf Multicast Encoder:} transmits the descriptions $\{\Wsf_1, \Wsf_2\}$ according to conventional coded multicast schemes \cite{maddah14fundamental,ji15order,ji2015caching,tian2016caching}, 
while  the portion of  $\Wsf_0$ missing at each receiver cache is transmitted via uncoded (naive) multicast. 

\begin{remark}
Differently from the single-cache setting analyzed in \cite{timo2016rate}, where caching the common description first is always optimal, in our case, when the cache capacity is smaller than the private description size, $\private$, it is optimal to first cache the private descriptions. 
\end{remark}


\begin{theorem}\label{thm:achievable rate}
Given a conditional pmf $p(u|x_1,x_2)$ such that $p(x_1|u)=p(x_2|u)$, a  cache capacity $M$, and a rate triplet  $(R_0, \rho, \rho)$$ \,\in \,$$\GWregion(\Usf)$, 
the peak $\Usf$-rate achieved by GW-LFU-TC is given by

\begin{equation}
\Rach(M,\Usf) = \inf R_{ach}(R_0,\private),
\end{equation}
where the infimum is over all rate triplets $(R_0,\rho,\rho) \in \GWregion(\Usf)$, and    $R_{ach}(R_0,\private)$ is 
\begin{equation}
R_{ach}(R_0,\private) =
\begin{cases}
 R_0 + 2\private  -2 M ,                   & \; M \in [0, \frac{1}{2}\private)  \\
 R_0  + \frac{3}{2}\private -  M,				      &\; M\in [\frac{1}{2}\private, R_0+\private) \\
 \frac{1}{2}R_0  + \private -  \frac{1}{2}M,	    &\; M\in [R_0+\private, R_0+2\private]. \notag
  \end{cases}
\end{equation}
Furthermore, optimizing over $\Usf$, the peak rate achieved by GW-LFU-TC is given by
$$\Rach(M) = \inf\, \Rach(M,\Usf),$$ where the infimum is over all choices of $\Usf$ with $|\Uc| \leq |\Xc_1|.|\Xc_2| + 2.$
\end{theorem}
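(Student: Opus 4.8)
The plan is to analyze the GW-LFU-TC scheme by separately accounting for the rate contributed by the common description $\Wsf_0$ (delivered via uncoded multicast) and the private descriptions $\{\Wsf_1,\Wsf_2\}$ (delivered via the TC correlation-unaware CACM scheme), and then to optimize the memory split across the three cache-capacity regimes identified in the Cache Encoder description. First I would recall the performance of the TC scheme from \cite{tian2016caching} for a two-file, two-user system with coded placement: for a library of two equal-size files of (normalized) size $\private$ each and cache capacity $m$ per user, the peak rate of TC is a piecewise-linear function of $m$, equal to $2\private - 2m$ for small $m$, then $\private + \tfrac{1}{2}(\private - m) = \tfrac{3}{2}\private - \tfrac{1}{2}m$ in a middle regime, down to the point $(m,R)=(\private,\tfrac{1}{2}\private)$ where coded placement and uncoded multicast meet. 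The key observation is that $\{\Wsf_1,\Wsf_2\}$ are (after Gray-Wyner encoding) treated as two independent files each of size $\private$, so TC applies verbatim to them; the common description $\Wsf_0$, being needed by both receivers regardless of demand, behaves like a single "broadcast" file whose uncached portion costs exactly one bit of multicast rate per uncached bit when using naive multicast with LFU placement.

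Next I would carry out the bookkeeping regime by regime. In the regime $M\in[0,\private)$ no cache is devoted to $\Wsf_0$, so its full rate $R_0$ must be sent uncoded; the remaining cache $M$ is given entirely to the TC scheme over $\{\Wsf_1,\Wsf_2\}$, contributing the TC peak rate evaluated at cache $M$. This yields $R_0 + (2\private - 2M)$ for $M\in[0,\tfrac{1}{2}\private)$ and $R_0 + (\tfrac{3}{2}\private - \tfrac{1}{2}M)$ once the TC rate enters its middle branch --- wait, here I must be careful: the stated theorem has $R_0 + \tfrac{3}{2}\private - M$ on $[\tfrac{1}{2}\private, R_0+\private)$, so the TC private-description scheme must actually continue to operate so that the $\Wsf_1,\Wsf_2$ cache stays pinned at $\private$ (its saturation point, rate $\tfrac{1}{2}\private$) while the incremental cache beyond $\tfrac{1}{2}\private$ goes to $\Wsf_0$ via LFU. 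Re-examining: on $[\tfrac{1}{2}\private, R_0+\private)$ the marginal bit of memory reduces the uncoded $\Wsf_0$ cost one-for-one, giving slope $-1$, and the constant is obtained by matching at the breakpoint. On $[R_0+\private, R_0+2\private]$, $\Wsf_0$ is fully cached (cost zero), all residual memory $M-R_0 \in [\private,2\private]$ goes to TC over the two size-$\private$ private files, which is now in its uncoded-multicast regime with rate $\tfrac{1}{2}(2\private) - \tfrac{1}{2}(M-R_0)\cdot(\text{appropriate factor})$; matching the piecewise TC curve here gives $\tfrac{1}{2}R_0 + \private - \tfrac{1}{2}M$. I would verify continuity of $R_{ach}(R_0,\private)$ at the two internal breakpoints $M=\tfrac{1}{2}\private$ and $M=R_0+\private$ to confirm the three pieces glue correctly; the error probability vanishes because each constituent (Gray-Wyner encoding at an interior point of $\GWregion(\Usf)$, TC, naive multicast) is asymptotically lossless and there are finitely many demands.

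Finally, the achievability of $\Rach(M,\Usf)=\inf R_{ach}(R_0,\private)$ over $(R_0,\private)\in\GWregion(\Usf)$ follows by time-sharing/choosing the best operating point on the $R_1=R_2=\private$ plane of the Gray-Wyner region for the given $M$; then $\Rach(M)=\inf_\Usf \Rach(M,\Usf)$ follows since any valid auxiliary $\Usf$ with $|\Uc|\le|\Xc_1||\Xc_2|+2$ gives an achievable scheme, and the cardinality bound on $\Usf$ is inherited directly from the Gray-Wyner region characterization in \eqref{eq:gwregion}. The main obstacle I expect is the precise justification of the middle regime $M\in[\tfrac{1}{2}\private, R_0+\private)$: one must argue that it is optimal (within this scheme family) to keep the private-description cache frozen at its TC-saturation value $\private$ while spending all additional memory on LFU-caching $\Wsf_0$, rather than, say, splitting incremental memory between $\Wsf_0$ and a not-yet-saturated TC allocation --- this requires comparing the marginal rate-per-bit of the two options (slope $-1$ from uncoded $\Wsf_0$ versus the steeper-then-shallower TC slope) and invoking the convexity of the TC rate-memory curve to conclude that the freeze-then-fill ordering is rate-optimal on this interval, which also explains the Remark contrasting this with the single-cache ordering of \cite{timo2016rate}.
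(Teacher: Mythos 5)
Your high-level plan (decompose the rate into an uncoded contribution from $\Wsf_0$ plus the TC rate over the two size-$\rho$ private descriptions, then sum regime by regime) is the right one, but you misremember the TC rate–memory curve, and the attempted ``reconciliation'' that follows is not a fix but a contradiction with the scheme as defined.

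The TC (Tian--Chen) peak rate for a two-user, two-file library with each file of normalized size $\rho$ and per-user cache $m$ has corner points $(0,2\rho)$, $(\tfrac{1}{2}\rho,\rho)$, $(\rho,\tfrac{1}{2}\rho)$, $(2\rho,0)$, i.e.\ the middle segment is $\tfrac{3}{2}\rho - m$ with slope $-1$, not $\tfrac{3}{2}\rho - \tfrac{1}{2}m$ with slope $-\tfrac{1}{2}$ as you wrote. Once you use the correct middle branch, all three cases of $R_{ach}(R_0,\rho)$ follow directly from the prescribed cache allocation, without any alternative story. For $M\in[0,\tfrac{1}{2}\rho)$: $\Wsf_0$ uncached, TC gets $M$, total $R_0 + (2\rho-2M)$. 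For $M\in[\tfrac{1}{2}\rho,\rho)$: still $\Wsf_0$ uncached, TC gets $M$ (now in its middle branch), total $R_0 + (\tfrac{3}{2}\rho - M)$. For $M\in[\rho,R_0+\rho)$: TC pinned at $\rho$ gives $\tfrac{1}{2}\rho$, $\Wsf_0$ is cached up to $M-\rho$, total $\bigl(R_0-(M-\rho)\bigr)+\tfrac{1}{2}\rho = R_0+\tfrac{3}{2}\rho - M$. The two sub-intervals $[\tfrac{1}{2}\rho,\rho)$ and $[\rho,R_0+\rho)$ happen to give the \emph{same} affine formula because the uncoded-$\Wsf_0$ LFU allocation also has slope $-1$; that is why the theorem merges them into one branch even though the cache encoder treats them differently. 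For $M\in[R_0+\rho,R_0+2\rho]$: $\Wsf_0$ fully cached, TC gets $M-R_0\in[\rho,2\rho]$, total $\rho - \tfrac{1}{2}(M-R_0)=\tfrac{1}{2}R_0+\rho-\tfrac{1}{2}M$, which you computed correctly.

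Your proposed ``fix''---keeping the $\{\Wsf_1,\Wsf_2\}$ cache pinned at $\rho$ while incremental memory beyond $\tfrac{1}{2}\rho$ goes to $\Wsf_0$---contradicts the paper's Cache Encoder, which explicitly does not cache any of $\Wsf_0$ for $M<\rho$, and more importantly it is infeasible: you cannot allocate $\rho$ bits of cache to the private descriptions when the total cache is $M<\rho$. Relatedly, your concluding ``main obstacle'' about arguing the optimality of a freeze-then-fill memory split among $\Wsf_0$ and TC is a red herring for this theorem: Theorem~\ref{thm:achievable rate} is a pure achievability statement that evaluates the rate of the \emph{prescribed} memory split; no optimization over splits is required (only the infimum over $(R_0,\rho,\rho)\in\GWregion(\Usf)$ and over $\Usf$, which is a choice of Gray-Wyner operating point, not a cache allocation). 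Continuity at the breakpoints $\tfrac{1}{2}\rho$ and $R_0+\rho$ is a useful sanity check, as you note, and the vanishing error probability follows from the individual components being asymptotically lossless, as you also note; these parts of your argument are fine.
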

\begin{proof}
See \cite{ParisaISIT17}.
\end{proof}
 
\subsection{Optimality of GW-LFU-TC}\label{sec:Order Optimality}

In order to prove the optimality of the GW-LFU-TC scheme, we first state the following theorem:

 \begin{theorem}\label{thm:achievable and lower bound} 
For any $\Usf$ and $M$, 
$$\Rach(M,\Usf)=\RGWstarU.$$
\end{theorem}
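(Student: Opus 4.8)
The plan is to show the two-sided inequality $\Rach(M,\Usf) \le \RGWstarU$ and $\Rach(M,\Usf) \ge \RGWstarU$. The first inequality is immediate: GW-LFU-TC is a particular GW-CACM scheme for the given $\Usf$ (its Gray-Wyner encoder operates on the plane $R_1=R_2=\private$ inside $\GWregion(\Usf)$, and its cache/multicast/decoder maps are valid instances of the generic GW-CACM maps), so by the definition of $\RGWstarU$ as the infimum rate over all GW-CACM schemes with descriptions in $\GWregion(\Usf)$, we have $\RGWstarU \le \Rach(M,\Usf)$. The substance of the theorem is the matching converse: no GW-CACM scheme operating with Gray-Wyner rates in $\GWregion(\Usf)$ can beat GW-LFU-TC, i.e. $\RGWstarU \ge \Rach(M,\Usf)$.

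For the converse direction I would invoke Theorem~\ref{thm:LowerBoundGW}, which already gives $\RGWstarU \ge \RGWUlb$, and then argue that the achievable expression $\Rach(M,\Usf)$ coincides with $\RGWUlb$. Concretely, I would take the four constraints defining $\RGWUlb$ — written in terms of $A \triangleq I(\Xsf_1,\Xsf_2;\Usf)$, $B_1 \triangleq H(\Xsf_1|\Usf)$, $B_2 \triangleq H(\Xsf_2|\Usf)$ — and compare them region by region in $M$ against the three-piece formula for $R_{ach}(R_0,\private)$ after one recognizes that, under the constraint $p(x_1|u)=p(x_2|u)$, the private rates are symmetric. The key observation is that the optimal Gray-Wyner operating point for GW-LFU-TC on that symmetric plane has $R_0 = A$ and $\private = \max\{B_1,B_2\}$ (one must take the private rate large enough to losslessly describe both files given $\Usf$, and symmetry of the conditional distributions makes $B_1=B_2$ so this is tight), and that when these values are plugged into $R_{ach}(R_0,\private)$ one recovers exactly the three active constraints of $\RGWUlb$ in each $M$-range, with the fourth constraint being inactive on that range. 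Since $\RGWUlb$ is defined as an infimum over $\Usf$ achieving the region, and since we are at a fixed $\Usf$, the match is of the per-$\Usf$ expressions; I would also need to check that the breakpoints $\tfrac12\private$, $R_0+\private$ of the achievable formula line up with the breakpoints $\tfrac12(A+B_1+B_2-M)$-type crossovers of the lower bound, which they do precisely because of the substitution $R_0=A$, $\private=B_1=B_2$.

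I would organize the writeup as: (1) state that GW-LFU-TC $\in$ GW-CACM$(\Usf)$, hence $\RGWstarU \le \Rach(M,\Usf)$; (2) recall from Theorem~\ref{thm:achievable rate} the closed form of $\Rach(M,\Usf)$, noting the infimum over $(R_0,\private)\in\GWregion(\Usf)$ is attained at $R_0=I(\Xsf_1,\Xsf_2;\Usf)$, $\private=H(\Xsf_1|\Usf)=H(\Xsf_2|\Usf)$ under the stated constraint on $p(u|x_1,x_2)$; (3) substitute into $R_{ach}$ and simplify each of the three $M$-pieces; (4) recall from Theorem~\ref{thm:LowerBoundGW} the closed form of $\RGWUlb$, and note that for the same symmetric $\Usf$ the four constraints reduce (two of them collapse into one by symmetry) and yield exactly the same three-piece function; (5) combine $\RGWstarU \le \Rach(M,\Usf) = \RGWUlb \le \RGWstarU$ to conclude equality. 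For full rigor I would also reference \cite{ParisaISIT17} for the detailed achievability/converse bookkeeping, exactly as the paper does for Theorems~\ref{thm:achievable rate} and the corollaries.

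The main obstacle I anticipate is step (3)–(4): verifying that the piecewise achievable rate and the piecewise lower bound agree not just in value but also in the location of their breakpoints, across all three (respectively four) regimes, and in particular making sure the optimizing Gray-Wyner point for the \emph{achievable} scheme (restricted to the $R_1=R_2=\private$ plane with $p(x_1|u)=p(x_2|u)$) is genuinely the point at which the \emph{lower bound} $\RGWUlb$ — which is unrestricted within $\GWregion(\Usf)$ — is also minimized for that $\Usf$; the symmetry imposed by $p(x_1|u)=p(x_2|u)$ forcing $H(\Xsf_1|\Usf)=H(\Xsf_2|\Usf)$ is what makes the two infima land on the same triplet, and this point deserves an explicit sentence. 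Everything else is routine algebra of linear expressions in $M$.
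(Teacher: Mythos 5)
Your proposal is correct and takes essentially the same approach as the paper: the paper's (one-line) proof is exactly the region-by-region comparison of $\Rach(M,\Usf)$ against $\RGWUlb$, which you carry out explicitly, and your sandwich $\RGWUlb \leq \RGWstarU \leq \Rach(M,\Usf) \leq R_{ach}(I(\Xsf_1,\Xsf_2;\Usf),H(\Xsf_1|\Usf)) = \RGWUlb$ is a clean way to package it. One small remark: the sandwich actually makes your step (2) (identifying where the infimum over $(R_0,\private)$ is attained) unnecessary as a separate claim — it suffices to plug in the corner point and observe equality with $\RGWUlb$; also, the paper's proof sketch mentions a Markov chain $\Xsf_1-\Usf-\Xsf_2$, but as your argument shows, only the symmetry $p(x_1|u)=p(x_2|u)$ (which forces $H(\Xsf_1|\Usf)=H(\Xsf_2|\Usf)$ and collapses the two asymmetric constraints in $\RGWUlb$) is what is actually used.
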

\begin{proof}
Similar to the proof of Theorem~\ref{thm:lower bounds coincide}, $\Rach(M,\Usf)$ is compared to the lower bound $\RGWUlb$ in each memory region, and for any $\Usf$ forming a Markov chain, $\Xsf_1-\Usf-\Xsf_2$.
\end{proof}
\begin{example}
Assuming the same setting as in Example~\ref{ex:cover all}, since $\Rlb= \RGWlb$ for any $M$, it follows from Theorem \ref{thm:achievable and lower bound} that the GW-LFU-TC scheme is optimal for all values of memory $M$. 
\end{example}
The following theorem characterizes the performance of the GW-LFU-TC scheme for different regions of $M$, and delineates the cache capacity region for which the scheme is optimal or near optimal. 
\begin{theorem}\label{thm:optimal achievable} 
Let 
\begin{align}
& \widetilde{M}_1 \triangleq  \max\limits_{X_1 - U - X_2} \frac{1}{2}  \min \Big\{H(\Xsf_1|\Usf),H(\Xsf_2|\Usf) \Big\}, \notag 
\end{align}
where the $\max$ is over all choices of $\Usf$ such that $p(x_1|u)=p(x_2|u)$. 

Then, for $M \in [0,\;\widetilde{M}_1 ]\cup [ H(\Xsf_1, \Xsf_2)-2\widetilde{M}_1, \;H(\Xsf_1, \Xsf_2) ]$, the GW-LFU-TC  scheme is optimal  i.e., $\Rach(M)= \Rstar$.

\noindent In addition, for $M\in (\widetilde{M}_1, \; H(\Xsf_1, \Xsf_2)-2\widetilde{M}_1)$, we have
$$ \Rach(M) - \Rstar \,\leq\ \frac{1}{2} \min \Big\{H(\Xsf_1|\Xsf_2),H(\Xsf_2|\Xsf_1) \Big\} - \widetilde{M}_1.$$
\end{theorem}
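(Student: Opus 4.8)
The strategy is to leverage the three theorems already established: Theorem~\ref{thm:achievable and lower bound} gives $\Rach(M,\Usf) = \RGWUlb$ for any $\Usf$ with $\Xsf_1 - \Usf - \Xsf_2$ (in particular with $p(x_1|u)=p(x_2|u)$); Corollary~(for $\RGWlb$) gives $\RGWlb = \inf_\Usf \RGWUlb$; and Theorem~\ref{thm:lower bounds coincide} tells us exactly where $\RGWlb = \Rlb$. Putting these together, optimality of GW-LFU-TC on a memory region reduces to showing that on that region $\Rach(M) = \RGWlb = \Rlb = \Rstar$, where the outer equality uses $\Rlb \le \Rstar \le \Rach(M)$ (the lower bound from Theorem~\ref{thm:LowerBound} and achievability from Theorem~\ref{thm:achievable rate}). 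So the first part amounts to identifying the memory range over which the infimizing $\Usf$ for GW-LFU-TC is one of the $\Usf$'s making the two lower bounds coincide, which is precisely the role of $\widetilde M_1$: it is the analogue of $M_1$ from Theorem~\ref{thm:lower bounds coincide}, but with the maximization restricted to the symmetric class $p(x_1|u)=p(x_2|u)$ that GW-LFU-TC is allowed to use.

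First I would verify the endpoints of the optimality interval. For $M \in [0,\widetilde M_1]$, pick the $\Usf^\star$ achieving $\widetilde M_1$; it satisfies $p(x_1|u^\star)=p(x_2|u^\star)$, hence forms a Markov chain, so Theorem~\ref{thm:achievable and lower bound} applies and $\Rach(M,\Usf^\star) = \RGWlb[](M,\Usf^\star)$. Since on $[0,\widetilde M_1]$ this $\Usf^\star$ also lands inside the first memory sub-interval of Theorem~\ref{thm:lower bounds coincide} (with $R_0+R_1+R_2 = H(\Xsf_1,\Xsf_2)$ forced by the Markov chain, consistent with the symmetric encoder being allowed to sit on that face — this needs a short check that the symmetric class can realize a sum-rate-optimal point), the gap $\RGWUlb - \Rlb$ vanishes there. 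Hence $\Rach(M) \le \Rach(M,\Usf^\star) = \Rlb \le \Rstar \le \Rach(M)$, forcing equality. The high-memory interval $[H(\Xsf_1,\Xsf_2)-2\widetilde M_1, H(\Xsf_1,\Xsf_2)]$ is handled identically using the second sub-interval of Theorem~\ref{thm:lower bounds coincide}.

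For the intermediate regime $M \in (\widetilde M_1,\, H(\Xsf_1,\Xsf_2)-2\widetilde M_1)$ I would bound the gap directly. Choose the specific symmetric $\Usf$ from the construction in Example~\ref{ex:cover all} (i.e. $\Usf = \Vsf$), for which $R_0 = I(\Xsf_1;\Xsf_2)$ and $\rho = H(\Xsf_1|\Xsf_2) = H(\Xsf_2|\Xsf_1)$ — wait, the fully symmetric point requires $H(\Xsf_1|\Xsf_2)=H(\Xsf_2|\Xsf_1)$; in general one instead takes the $\Usf$ achieving $\widetilde M_1$ so that $\rho = \min\{H(\Xsf_1|\Usf),H(\Xsf_2|\Usf)\}$ at that $\Usf$ equals $2\widetilde M_1$. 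Plugging $M = R_0 + \rho$ (the natural split point) or, more carefully, evaluating $R_{ach}(R_0,\rho)$ in the middle branch $M\in[\tfrac12\rho, R_0+\rho)$ at this $\Usf$ and subtracting the matching branch of $\Rlb$, the $M$-dependence cancels and one is left with a constant; comparing constants gives $\Rach(M)-\Rstar \le \Rach(M)-\Rlb \le \tfrac12\min\{H(\Xsf_1|\Xsf_2),H(\Xsf_2|\Xsf_1)\} - \widetilde M_1$, which is exactly the claimed bound (noting $\widetilde M_1 \le \tfrac12\min\{H(\Xsf_1|\Xsf_2),H(\Xsf_2|\Xsf_1)\}$ since the unconditioned $\Usf$, i.e. $\Usf$ trivial, is a valid symmetric choice, so the bound is nonnegative).

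**Main obstacle.** The delicate point is the interaction between the \emph{restricted} maximization (over $\Usf$ with $p(x_1|u)=p(x_2|u)$) defining $\widetilde M_1$ and the achievability formula $R_{ach}(R_0,\rho)$, which is only stated for symmetric rate triplets $(R_0,\rho,\rho)$. One must confirm that (i) the $\Usf$ attaining $\widetilde M_1$ indeed permits a rate triplet with $R_1=R_2=\rho$ on the boundary of $\GWregion(\Usf)$ with $R_0+2\rho = H(\Xsf_1,\Xsf_2)$ — i.e. the symmetry constraint $p(x_1|u)=p(x_2|u)$ is compatible with sitting on the minimum-sum-rate face — and (ii) that in the middle memory branch the algebraic cancellation of $M$ really produces the stated constant and not something larger; this requires carefully matching which branch of $\Rlb$ (the second vs. third inequality in Theorem~\ref{thm:LowerBound}) is active against the middle branch of $R_{ach}$ over the sub-range in question. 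I expect (i) to be the genuinely subtle step, since it is where the gap between $M_1$ and $\widetilde M_1$ could in principle be strict; the rest is bookkeeping across the piecewise-linear regimes.
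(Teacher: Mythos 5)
Your overall plan is correct and follows the route the paper almost certainly intends (the paper defers the proof to a reference, but everything needed is Theorems~\ref{thm:LowerBound}, \ref{thm:LowerBoundGW}, \ref{thm:lower bounds coincide}, \ref{thm:achievable rate}, \ref{thm:achievable and lower bound}, combined with the sandwich $\Rlb \leq \Rstar \leq \Rach(M) \leq \Rach(M,\Usf^\star)$). Three remarks on the details:

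First, what you call the ``genuinely subtle step'' (i) is actually a non-issue: the definition of $\widetilde{M}_1$ already restricts the maximization to $\Usf$ with $\Xsf_1 - \Usf - \Xsf_2$ Markov \emph{and} $p(x_1|u)=p(x_2|u)$, so the maximizer $\Usf^\star$ automatically has $I(\Xsf_1,\Xsf_2;\Usf^\star)+H(\Xsf_1|\Usf^\star)+H(\Xsf_2|\Usf^\star)=H(\Xsf_1,\Xsf_2)$, i.e.\ it sits on the minimum-sum-rate face. There is nothing extra to check. Moreover, the symmetry condition $p(x_1|u)=p(x_2|u)$ forces $H(\Xsf_1|\Usf^\star)=H(\Xsf_2|\Usf^\star)$, so the symmetric rate triplet $(R_0,\rho,\rho)$ with $\rho=2\widetilde{M}_1$ and $R_0=H(\Xsf_1,\Xsf_2)-4\widetilde{M}_1$ is exactly the corner point $\bigl(I(\Xsf_1,\Xsf_2;\Usf^\star),H(\Xsf_1|\Usf^\star),H(\Xsf_2|\Usf^\star)\bigr)$ of $\GWregion(\Usf^\star)$.

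Second, your justification of $\widetilde{M}_1 \leq \tfrac12\min\{H(\Xsf_1|\Xsf_2),H(\Xsf_2|\Xsf_1)\}$ is wrong as stated: a constant $\Usf$ is \emph{not} a feasible choice, since $\Xsf_1-\Usf-\Xsf_2$ with $\Usf$ trivial requires $\Xsf_1\perp\Xsf_2$. The correct argument is data processing on the Markov chain: $\Xsf_1-\Usf-\Xsf_2$ gives $I(\Xsf_1;\Xsf_2)\leq I(\Xsf_1;\Usf)$ and $I(\Xsf_1;\Xsf_2)\leq I(\Xsf_2;\Usf)$, hence $H(\Xsf_1|\Usf)\leq H(\Xsf_1|\Xsf_2)$ and $H(\Xsf_2|\Usf)\leq H(\Xsf_2|\Xsf_1)$, which yields the inequality for every feasible $\Usf$ and thus for the maximizer.

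Third, in the intermediate regime the claim that ``the $M$-dependence cancels'' is only true in one sub-range. With $\rho=2\widetilde{M}_1$ and $R_0=H(\Xsf_1,\Xsf_2)-4\widetilde{M}_1$, the middle branch gives $R_{ach}(R_0,\rho)=H(\Xsf_1,\Xsf_2)-\widetilde{M}_1-M$ throughout $M\in[\widetilde{M}_1,\,H(\Xsf_1,\Xsf_2)-2\widetilde{M}_1)$, but $\Rlb$ switches active constraint within this interval. One must check all three cases: when the first constraint binds ($M\leq\tfrac12\min\{H(\Xsf_1|\Xsf_2),H(\Xsf_2|\Xsf_1)\}$) the gap is $M-\widetilde{M}_1$, which is $M$-dependent but still $\leq\tfrac12\min\{\cdot\}-\widetilde{M}_1$; when the third constraint binds the gap is exactly $\tfrac12\min\{H(\Xsf_1|\Xsf_2),H(\Xsf_2|\Xsf_1)\}-\widetilde{M}_1$ (using $H(\Xsf_1,\Xsf_2)-\max\{H(\Xsf_1),H(\Xsf_2)\}=\min\{H(\Xsf_1|\Xsf_2),H(\Xsf_2|\Xsf_1)\}$), and this is where the cancellation you describe occurs; when the second constraint binds the gap is again $M$-dependent but bounded by the same quantity. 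All three sub-cases deliver the stated bound, so your conclusion stands, but the bookkeeping is not as simple as a single cancellation.
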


\begin{proof}
See \cite{ParisaISIT17}
\end{proof}

\subsection{Illustration of Results: Doubly Symmetric Binary Source}\label{sec:Example1}
Consider, as a 2-DMS, a doubly symmetric binary source (DSBS) with joint pmf 
$p(x_1,x_2) = \frac{1}{2} (1-p_0)\delta_{x_1,x_2} + \frac{1}{2} p_0(1-\delta_{x_1,x_2})$,  $x_1,x_2 \in\{0,1\}$, 
and parameter $p_0\in[0,\frac{1}{2}]$.
Then,
\begin{align}
& H(\Xsf_1) = H(\Xsf_2) = 1,\notag \\
& H(\Xsf_1|\Xsf_2) = H(\Xsf_2|\Xsf_1) = h(p_0),\notag \\
& H(\Xsf_1,\Xsf_2) = 1 + h(p_0),\notag
\end{align}
where $h(p) = -p \log(p)-(1-p)\log(1-p)$ is the binary entropy function.
As derived in \cite{wyner1975common}, an achievable Gray-Wyner rate region of a DSBS restricted to the plane $\{(R_0,R_1,R_2) : R_1=R_2=\private\}$,  is described by the set of rate triplets  $(R_0,\private,\private)$ with $R_0$ given by
\begin{equation}
R_0 \geq \begin{cases} 1+ h(p_0) - 2\private,              &\hspace{0.2cm} 0 \leq \private < h(p_1) \\
  f(\private) 			                                &\hspace{0.2cm} h(p_1)\leq \private \leq 1
  \end{cases},
\end{equation}
where $p_1=\frac{1}{2}(1-\sqrt{(1-2p_0)})$,
\begin{align}
& f(\private) \triangleq 1+h(p_0)+ \Big(h^{-1}(\private)-\frac{p_0}{2} \Big)\log\Big(h^{-1}(\private)-\frac{p_0}{2} \Big) +\notag \\
& p_0 \log\Big(\frac{p_0}{2}\Big) + \Big( 1-h^{-1}(\private)-\frac{p_0}{2} \Big)\log\Big( 1-h^{-1}(\private)-\frac{p_0}{2} \Big),\notag
\end{align}
and $h^{-1}(\private)$ is the inverse of the binary entropy function.  
 
We compare the performance of the proposed GW-LFU-TC scheme with respect to: 1) LFU caching with uncoded multicasting (LFU-UM), 2) the deterministic correlation-unaware CACM in \cite{tian2016caching}, referred to as TC, 3) the lower bound on the GW-CACM peak rate-memory function ($R^{LB}_{GW}$), and 4) the lower bound on the optimal peak rate-memory function ($R^{LB}$). Fig.~\ref{fig:simulations}  displays  the rate-memory trade-offs for $p_0 = 0.2$. 

In line with Theorems \ref{thm:lower bounds coincide} and \ref{thm:optimal achievable}, Fig.~\ref{fig:simulations} shows that the lower bound on the Gray-Wyner rate-memory function ($R^{LB}_{GW}$) coincides with the lower bound on the optimal rate-memory function ($R^{LB}$) for $M \leq \widetilde{M}_1=0.25$ and $M\geq ( H(\Xsf_1,\Xsf_2) -2\widetilde{M}_1) =1.22$, and GW-LFU-TC is optimal in this region, while correlation-unaware schemes, LFU-UM and TC, fall short. Furthermore, the gap between the rate achieved with GW-LFU-TC and the optimal peak rate-memory function is less than $0.11$, which is less than half of the conditional entropy, $0.36$. Finally, in line with Theorem \ref{thm:achievable and lower bound}, GW-LFU-TC achieves $R^{LB}_{GW}$ for any $M$.


\begin{figure} 
\centering
\includegraphics[width=0.82\linewidth]{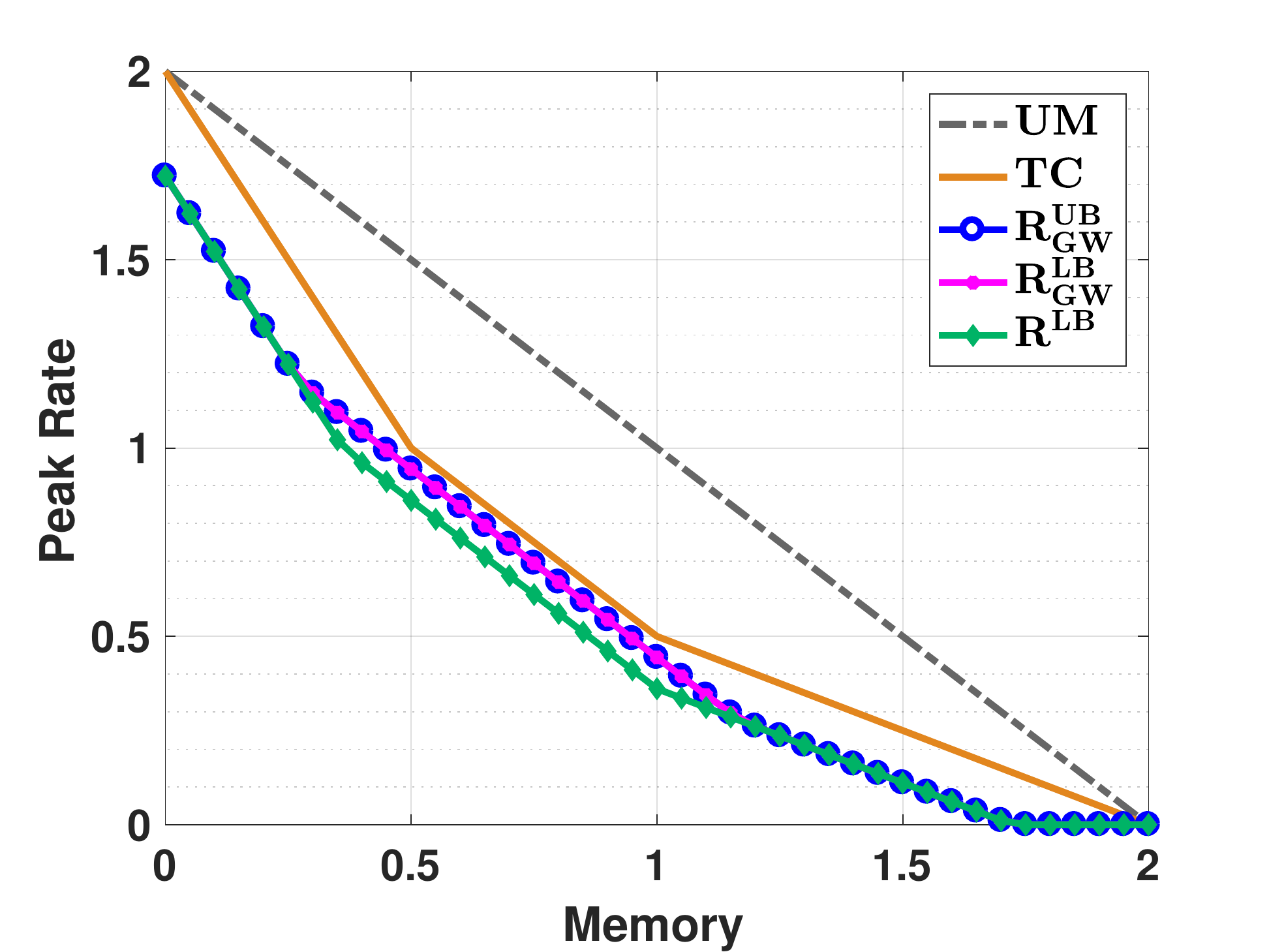}
\caption{Rate-memory trade-off for a DSBS with $p_0 = 0.2$.}
\label{fig:simulations}
\end{figure}

\section{Concluding Remarks}\label{sec:Conclusions}
In this paper, we have studied the fundamental limits of cache-aided communication systems under the assumption of correlated content for a two-user two-file network.
We have derived a lower bound on the peak rate-memory function for such systems
and proposed a class of schemes based on
a two-step source coding approach.
Files are first compressed using 
Gray-Wyner source coding, 
and then cached and delivered using a combination of existing correlation-unaware cached aided coded multicast schemes. 
We have fully characterized the rate-memory trade-off of such class of schemes, proposed an achievable two-step scheme, and proved its optimality for different memory regimes. 
Finally, in \cite{ParisaISIT17}, we provide an extended analysis that includes the characterization of both peak and average rate-memory trade-offs in more general user-file settings. 

\section*{Acknowledgement}
The authors would like to thank M. Wigger and D. G$\ddot{\text u}$nd$\ddot{\text u}$z  for their useful discussions on the Gray-Wyner network.
\bibliographystyle{IEEEtran}
\bibliography{References2}

\end{document}